\newtheorem{prop}{Proposition}
\newtheorem{cor}{Corollary}
\def\beq#1#2\eeq{%
        \begin{equation}%
        \label{#1}%
            #2%
        \end{equation}%
    }
\newcommand{\dee}{{\mathrm d}}
\theoremstyle{plain}
\newtheorem{theorem}{Theorem}
\theoremstyle{remark}
\theoremstyle{definition}
\title[Bianchi-VII)]{Quantum Bianchi-VII problem, Mathieu functions and arithmetic}
\author{A.P. Veselov}
\address{Department of Mathematical Sciences,
Loughborough University, Loughborough LE11 3TU}
\email{A.P.Veselov@lboro.ac.uk}
\author{Y. Ye}
\address{Department of Foundational Mathematics, Xi’an Jiaotong-Liverpool University, Suzhou, 215000, China}
\email{yiru.ye@xjtlu.edu.cn}
\begin{document}

\maketitle

\begin{abstract}
The geodesic problem on the compact threefolds with the Riemannian metric of Bianchi-VII$_0$ type is studied in both classical and quantum cases. We show that the problem is integrable and describe the eigenfunctions of the corresponding Laplace-Beltrami operators explicitly in terms of the Mathieu functions with parameter depending on the lattice values of some binary quadratic forms. We use the results from number theory to discuss the level spacing statistics in relation with the Berry-Tabor conjecture and compare the situation with Bianchi-VI$_0$ case (Sol-case in Thurston's classification) and with Bianchi-IX case, corresponding to the classical Euler top. 
 \end{abstract}


\section{Introduction}

Let $M^3$ be a closed Riemannian 3-fold and consider the corresponding geodesic problem. It is natural to expect that the question of its integrability is connected with special geometric properties of $M^3.$ 

One such property is local homegeneity, meaning that any two points have isometric neighbourhoods. 
Under certain additional assumptions
Thurston classified all special geometries in dimension 3, showing that there are 8 of them: the Euclidean $E^3$, spherical $S^3$ and hyperbolic $\mathbb{H}^3$, the product
type $S^2 \times {\mathbb R}$ and $\mathbb{H}^2 \times {\mathbb R}$ and three
geometries related to the 3D Lie groups: $Nil$, $Sol$ and $\widetilde{SL(2,\mathbb R)}$, where the last group is the universal covering of $SL(2,\mathbb R)$ (see \cite{Th}).
The topology of the corresponding compact quotient threefolds is discussed in \cite{Scott}. 

On the other hand, Bianchi \cite{Bianchi} classified all the three-dimensional Lie groups, which provide the most natural class of locally homogeneous manifolds if we supply them with a left-invariant metric. The correspondence between Thurston's and Bianchi's classification was discussed in relation with the cosmological models in \cite{Fagundes,FIK} (see the next section).

The geodesic flows for the 3-folds with  $Nil$, $Sol$ and $SL(2,\mathbb R)$ Thurston's geometries (corresponding to Bianchi VI$_0$ and VIII types respectively) were studied in \cite{Butler, BT, BVY}, which revealed the coexistence of the chaos and integrability in these systems. The quantum version of the $Sol$-case was studied in \cite{BDV} in relation with quantum monodromy and arithmetic. The general Bianchi-IX case is well-known in classical mechanics as Euler top \cite{Arnold} and its quantum version was studied in \cite{KI} (see also \cite{GV}).

In this paper we will discuss the last remaining Bianchi-VII$_0$ case. 
The corresponding Lie group is group $E(2)=SO(2)\ltimes \mathbb R^2$ of orientation preserving isometries of Euclidean plane. This case is missing in Thurston's approach since this isometry group is not maximal and can be extended to the group $E(3)$, corresponding to the 3D Euclidean geometry. 

The classical system describing geodesics of left-invariant metric on $E(2)$ is completely integrable, as it was already shown by Bolsinov and Bao \cite{BB}. We will show that, in contrast to the $Sol$-case \cite{BT}, the integrability remains on the compact quotients $M^3_n=E(2)/G_n$ of $E(2)$ by a discrete crystallographic subgroup $G_n=\mathbb Z_n \ltimes \mathcal L \subset E(2)$, where $\mathbb Z_n$ is generated by an element $R \in SO(2)$, which is an order $n$ rotational symmetry of the corresponding 2D lattice $\mathcal L \subset \mathbb R^2$. 

There are 5 types of such threefolds corresponding to possible values of $n=1,2,3,4,6$ (see e.g. \cite{Coxeter}). 
They can be viewed as torus bundles over $S^1$ with gluing map given by an elliptic element $A \in SL(2, \mathbb Z).$ In that sense they are similar to the $Sol$-manifolds $M^3_A$ with gluing map  $A$ being a hyperbolic element of $SL(2, \mathbb Z)$ (see \cite{BDV}). However, we will see that the difference between these two cases is quite substantial. In particular, in contrast to the $Sol$-manifolds with  geodesic flow being non-integrable in analytic category and having positive topological entropy \cite{BDV,BT}, in the Bianchi-VII case we have completely integrable geodesic flow on $M^3_n$ with integrals polynomial in momenta and an explicit integration in terms of the elliptic functions.

We show that in the quantum case the spectral problem for the corresponding Laplace-Beltrami operator on $M^3_n$ can be explicitly solved in terms of the classical Mathieu functions \cite{WW}.
However, the properties of the spectrum depend on the order $n$ of the gluing map $R$.
We show that in the symmetric case $M^3_n$ with $n=3,4,6$ the spectrum has nice arithmetic properties similar to the $Sol$-case studied in \cite{BDV} and related to the classical problem of describing the values of certain integer quadratic forms $Q$ on two-dimensional integer lattice. 

In particular, for $n=4$ corresponding to the square lattice the parameters of the Mathieu equation and the corresponding eigenvalues depend on the values of the quadratic form $Q(x,y)=x^2+y^2, x,y \in \mathbb Z.$ The study of these values is a very classical part of number theory with famous contributions from Fermat, Euler, Gauss and Jacobi. A classical result due to E. Landau \cite{Landau} says that the number of integers up to a number $K$ represented by this form grows as
${K}/{\sqrt{\log K}}$.
This means that the spectrum of the corresponding symmetric Bianchi threefold $M^3_4$ is highly degenerate and the corresponding level space statistics does not satisfy the Berry-Tabor conjecture \cite{Berry-Tabor}.
Similar results hold for $n=3,6$ related to hexagonal lattice and the integer values of the quadratic form $Q(x,y)=x^2+xy+y^2.$

For $n=1,2$ the situation is more delicate and depends on the geometry of the lattice and on the diophantine properties of the corresponding quadratic form $Q$.
In particular, from the deep results of Sarnak \cite{Sarnak} and Eskin, Margulis and Mozes \cite{Eskin} it follows that the pair correlation of the corresponding values does satisfy Berry-Tabor prediction on a set of full measure in the space of quadratic forms, in particular for $Q(x,y)=x^2+\beta y^2$ with diophantine $\beta.$
Our results suggest that the same should be true for the spectrum of the corresponding Bianchi-VII threefolds.

Finally, in the last section we make a comparison with the situation in Bianchi-VI and Bianchi-IX cases.

\section{Geodesic flow on Bianchi-VII$_0$ threefolds.}

In 1898 Bianchi \cite{Bianchi} completed the classification of 3D real Lie algebras by splitting them into 9 types: I-IX.
Since we are interested in compact quotients of the corresponding Lie groups, we must consider only the Lie algebras admitting bi-invariant volume form, which are called {\it unimodular}.
In Bianchi's list the unimodular Lie algebras are of types
\begin{itemize}
{\it 
\item I: abelian $\mathbb R^3$
\item II: nilpotent (Heisenberg) $nil$
\item VI$_0$: solvable 2D Poincare $sol \cong e(1,1)$
\item VII$_0$: solvable 2D Euclidean $e(2)$
\item VIII: simple $sl(2,\mathbb R)$
\item IX: simple $so(3).$
}
\end{itemize}
The curvature of the corresponding left-invariant metrics was studied in details by Milnor in \cite{Milnor}.

One can compare this list with Thurston's list of 8 special geometries in 3 dimensions \cite{Th}:
{\it Euclidean $E^3$, spherical $S^3$, hyperbolic $\mathbb{H}^3$, the product
type $S^2 \times {\mathbb R}$ and $\mathbb{H}^2 \times {\mathbb R}$ and three
geometries related to 3D Lie groups $Nil, \, Sol, \, SL(2,\mathbb R).$}

Recall that special geometry here means local homogeneity (any two points have isometric neighbourhoods),
existence of compact models and some maximality condition. Note that general Bianchi VII$_0$ and IX types are missing in Thurston's list since they do not satisfy the maximality condition having more symmetric versions: Euclidean $E^3$ and spherical $S^3$ respectively.

The integrability of the geodesic flow on 3D manifolds with Thurston's geometries of group types $Nil, \, Sol, \, SL(2,\mathbb R)$ (corresponding to Bianchi types II, VI$_0$, VIII) was studied by Butler \cite{Butler}, Bolsinov and Taimanov \cite{BT} and by Bolsinov and the authors \cite{BVY} respectively. The quantum $Sol$-case (or, equivalently, Bianchi-VI$_0$ case) was investigated by Bolsinov, Dullin and one of the authors in \cite{BDV}. 

The general Bianchi-IX$_0$ problem, corresponding to the geodesic flow on the group $SO(3)$ with left-invariant metric, is well known in classical mechanics as Euler top \cite{Arnold}. Its quantum version was studied at the very early age of quantum mechanics by Kramers and Ittmann \cite{KI}, who discovered close relations with the famous Lam\`e equation 
(see also \cite{GV}). In Thurston's list Bianchi-IX case is represented only by its most symmetric version $S^3$, which can be considered as the group $SU(2)$ with bi-invariant metric.

In this paper we will study the only remaining quantum Bianchi-VII$_0$ case, which has many parallels with Bianchi-VI$_0$ case, but deserves a special investigation.


We follow here the notations of Bolsinov and Bao \cite{BB}, where one can find also the relevant references about geodesic flows on 3 dimensional Lie groups.

The Lie algebras of Bianchi type VII have 3 generators $X_0, X_1, X_2$, satisfying the relations
$$[X_0,X_1]=a X_1-X_2, \,\, [X_0, X_2]=X_1+aX_2,\, [X_1, X_2]=0,$$
depending on a parameter $a.$ These Lie algebras are solvable, but for $a\neq 0$ are not unimodular.

When $a=0$ we have unimodular Bianchi subcase VII$_0$, corresponding to the Lie group $E(2)$ of isometries of the Euclidean plane.
The Lie algebra $e(2)$ is the semi-direct product of $\mathbb R^2$ by $\mathbb R$ with 3 generators $X_0, X_1, X_2$, satisfying the relations
\beq{VII0}
[X_0,X_1]=-X_2, \,\, [X_0, X_2]=X_1,\, [X_1,X_2]=0.
\eeq
The corresponding left-invariant vector fields are
$$
X_0=\partial_0,\, X_1=\cos q_0 \partial_1-\sin q_0 \partial_2,\, X_2=\sin q_0 \partial_1+\cos q_0 \partial_2,
$$
where $\partial_i=\frac{\partial}{\partial q_i}, \, i=0,1,2.$

The Hamiltonians of the geodesic flow corresponding to a left-invariant metric can be reduced (see \cite{BB}) to the form
\beq{Ham}
H=\frac{1}{2}(AX_0^2+BX_1^2+X_2^2),
\eeq
where $X_0, X_1, X_2$ are considered as linear functions on the dual space $e(2)^*.$
In the canonical coordinates $(p,q)$ we have
\beq{Ham}
H=\frac{1}{2}(Ap_0^2+p_1^2+p_2^2+(B-1)( p_1\cos q_0-p_2\sin q_0)^2),
\eeq
where $A,B>0$ are two real parameters.

The system is clearly completely integrable in Liouville sense with the commuting integrals $H$ and $F_1=p_1, F_2=p_2.$

Corresponding system can be easily integrated in quadratures (see \cite{BB}). Indeed, fixing the values of the integrals $F_1=p_1=c_1, \, F_2=p_2=c_2$ and assuming without loss of generality that $c_2=0$, we can write the equation of motion as
\beq{eqs}
\begin{cases}
\dot q_0 = Ap_0\\
\dot p_0 = (B-1)c_1^2 \sin q_0 \cos q_0=\frac{1}{2}(B-1)c_1^2 \sin 2q_0\\
\dot q_1 = c_1+(B-1)c_1 \cos^2 q_0=c_1 \sin^2 q_0+Bc_1 \cos^2 q_0\\
\dot q_2=-(B-1)c_1 \sin q_0 \cos q_0=-\frac{1}{2}(B-1)c_1 \sin 2q_0
\end{cases}
\eeq 
with $\dot p_1=\dot p_2=0.$ Note that the first two equations are simply the equations of motion of the mathematical pendulum with 
\beq{pend}
H_0=\frac{1}{2}p_0^2+C \cos 2q_0, \quad C=\frac{1}{4}(B-1)c_1^2,
\eeq
which can be integrated explicitly in terms of the Jacobi elliptic functions. Qualitatively, if the energy $H_0=h_0>|C|$ then we have rotations, while for $|h_0|<|C|$ the orbits are periodic. For the critical levels $h_0=\pm|C|$ we have separatrices and the stable equilibriums respectively (see Fig.1).

\begin{figure}[h]
  \includegraphics[width=120mm]{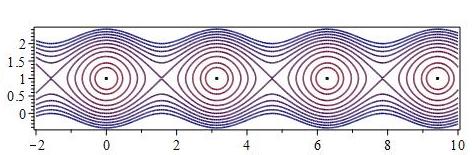}
  \caption{Phase portrait of mathematical pendulum}
\end{figure}

For known $(p_0(t), q_0(t))$ we can find $q_1(t)$ and $q_2(t)$ by integrating the last two equations, which also can be done explicitly in terms of the Jacobi elliptic functions. Note that since $B>0$ for positive $c_1$ $\dot q_1$ is always positive, so $q_1$ monotonically increases,
while $\dot q_2=-c_1^{-1}\dot p_0$, so $$q_2=-c_1^{-1}p_0+const$$ is always bounded. 

This describes the corresponding geodesics on the group $E(2).$ 

Consider now the compact case $M^3_n=E(2)/G_n, \, G_n=\mathbb Z_n \ltimes \mathcal L \subset E(2)$, where $\mathbb Z_n$ is generated by matrix $R_n \in SO(2)$, defining the rotational symmetry of the corresponding 2D lattice $\mathcal L \subset \mathbb R^2$. More precisely, in types 1 and 2 with generic lattice $R_1=I$ and $R_2=-I$, in types 3, 4 and 6, corresponding to square and hexagonal lattices
\beq{form}
R_3=\left(\begin{array}{cc} -\frac{1}{2}& -\frac{\sqrt 3}{2} \\  \frac{\sqrt 3}{2}& -\frac{1}{2} \end{array}\right),\, 
 R_4=\left(\begin{array}{cc} 0& -1 \\  1 & 0 \end{array}\right),\, 
R_6=\left(\begin{array}{cc} \frac{1}{2}& -\frac{\sqrt 3}{2} \\  \frac{\sqrt 3}{2}& \frac{1}{2} \end{array}\right).
\eeq
respectively. For $n=1$ we have the {\it Bianchi torus} $$M^3_1=T_{\mathcal L}^3=E(2)/\mathcal L,$$
where $\mathcal L \subset \mathbb R^2(q_1,q_2)$ is any two-dimensional lattice, with the corresponding left-invariant $E(2)$-metrics
\beq{met}
ds^2=A^{-1} dq_0^2+B^{-1}(dq_1^2+dq_2^2 +(B-1)(\sin q_0 dq_1+\cos q_0 dq_2)^2).
\eeq
In other cases $M^3_n=T_{\mathcal L}^3/\mathbb Z_n$ are simply quotients of the Bianchi torus under additional assumption of rotational symmetry $R$ of the lattice $\mathcal L.$

\begin{prop}
The geodesic flow on Bianchi-VII$_0$ threefolds $M_n^3=T_{\mathcal L}^3/ G_n$ is integrable in Liouville sense with three polynomial in momenta Poisson commuting integrals.
\end{prop}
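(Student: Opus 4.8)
The plan is to produce the three commuting integrals explicitly and then check that they descend to the quotient $M_n^3 = T_{\mathcal L}^3/G_n$. On the group $E(2)$ itself we already have the Hamiltonian $H$ of \mref{Ham} together with $F_1 = p_1$ and $F_2 = p_2$, which Poisson-commute since $q_1, q_2$ are cyclic; these are polynomial (indeed quadratic) in the momenta. The issue is that $F_1$ and $F_2$ are \emph{not} invariant under the full group $G_n$: the lattice translations in $\mathcal L$ act trivially on momenta, but the generator $R_n \in \mathbb Z_n$ rotates the $(q_1,q_2)$-plane and hence rotates the pair $(p_1,p_2)$ by the same angle $\theta = 2\pi/n$. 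So $F_1, F_2$ individually do not descend, and one must replace them by $\mathbb Z_n$-invariant combinations.

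The key step is therefore to find, for each $n \in \{1,2,3,4,6\}$, a polynomial in $p_1, p_2$ that is invariant under the order-$n$ rotation and Poisson-commutes with $H$. The natural candidate is $F = p_1^2 + p_2^2$, which is rotation-invariant for every $n$; it Poisson-commutes with $H$ because $H$ depends on $(p_1,p_2)$ only through $p_1^2+p_2^2$ and the combination $p_1\cos q_0 - p_2 \sin q_0$, and $\{p_1^2+p_2^2, H\}$ vanishes after a short computation using $\{p_i, q_0\}=0$ and the identity relating the cross term to a rotation of $(p_1,p_2)$. However, $H$ and $F = p_1^2+p_2^2$ are only two functions, so for Liouville integrability on the $3$-dimensional $M_n^3$ (phase space of dimension $6$) we need a \emph{third} independent commuting integral. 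For $n=1$ we may simply keep $F_1=p_1$ and $F_2=p_2$ and we are done immediately; for $n=2$, since $R_2=-I$ fixes $p_1^2, p_1 p_2, p_2^2$, the three functions $H, p_1^2, p_1 p_2$ (or $p_2^2$) work and one checks pairwise commutativity. The genuinely delicate cases are $n=3,4,6$: here the third integral must be built from the reduced pendulum dynamics. The cleanest route is to observe that on the invariant submanifold $\{F_1=c_1, F_2=c_2\}$ the motion in $(q_0,p_0)$ is governed by the pendulum Hamiltonian $H_0$ of \mref{pend} with $C = \tfrac14(B-1)(c_1^2+c_2^2)$ depending only on $F=c_1^2+c_2^2$; so the function
\beq{third}
G = \tfrac12 A^{-1} p_0^2 + \tfrac14 (B-1)(p_1^2+p_2^2)\cos 2q_0
\eeq
(a polynomial of degree $4$ in the momenta, up to the constant factor) is $\mathbb Z_n$-invariant for every $n$, since both $p_0$ and $p_1^2+p_2^2$ are, and one verifies $\{G, H\} = \{G, F\} = 0$ directly. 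Then $(H, F, G)$ is the required triple: they are functionally independent on an open dense set because $H$ involves the cross term $p_1\cos q_0 - p_2\sin q_0$ that neither $F$ nor $G$ sees.

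The final step is bookkeeping: check that $H$, $F = p_1^2+p_2^2$ and $G$ are all invariant under the lattice translations $\mathcal L$ (immediate, since all three are independent of $q_1, q_2$) and under $R_n$ (shown above), hence are well-defined functions on $T^*M_n^3$; confirm they are polynomial in momenta (degrees $2, 2, 4$); and confirm pairwise Poisson commutativity and functional independence on a dense open set. I expect the main obstacle to be the functional-independence verification in the symmetric cases $n=3,4,6$ — one must exhibit an explicit point where $dH, dF, dG$ are linearly independent, which requires being slightly careful at the pendulum separatrix and equilibria where $G$ and $H$ can become dependent, but generically the cross-term in $H$ breaks any relation. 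An equally clean alternative for all $n$ is to note that $H, F_1, F_2$ Poisson-commute on $E(2)$ and the quotient map is a Riemannian submersion, so the integrability descends provided we replace $(F_1,F_2)$ by a $G_n$-invariant subalgebra of the same "size" — which is exactly the content of the explicit triple above.
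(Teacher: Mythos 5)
Your setup is right for $n=1,2$ (and your observation that any function of $p_1,p_2$ alone commutes with $H$, since $q_1,q_2$ are cyclic, is the correct mechanism), but the construction for the symmetric cases $n=3,4,6$ has a genuine gap. Your proposed third integral
$G = \tfrac12 A^{-1}p_0^2+\tfrac14(B-1)(p_1^2+p_2^2)\cos 2q_0$
is not conserved. The pendulum reduction in \mref{eqs}--\mref{pend} was carried out after normalising $c_2=0$; for general $(p_1,p_2)$ the cross term in $H$ is
$(p_1\cos q_0-p_2\sin q_0)^2=(p_1^2+p_2^2)\cos^2(q_0+\alpha)$ with a phase $\alpha$ depending on the \emph{direction} of $(p_1,p_2)$, so the pendulum potential is $\cos 2(q_0+\alpha)$, not $\cos 2q_0$. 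A direct computation gives
$\{G,H\}\neq 0$ whenever $p_2\neq 0$ (the terms proportional to $(p_1^2-p_2^2)\sin 2q_0$ and $p_1p_2\cos 2q_0$ do not cancel against $(p_1^2+p_2^2)\sin 2q_0$). If you repair $G$ by restoring the phase, you obtain $\tfrac{A}{2}p_0^2+\tfrac{B-1}{2}(p_1\cos q_0-p_2\sin q_0)^2=H-\tfrac12(p_1^2+p_2^2)$, which is functionally dependent on $H$ and $F$, so this route cannot produce a third independent integral. (Also, your $G$ is quadratic in momenta, not quartic: $\cos 2q_0$ is a function of position.)

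The missing idea is that for each finite $n$ the ring of $\mathbb Z_n$-invariant polynomials on the $(p_1,p_2)$-plane has \emph{two} algebraically independent generators, not one: besides $p_1^2+p_2^2$ one can take $\mathrm{Re}\,(p_1+ip_2)^n$ (or a polynomial in it), e.g. $p_1^3-3p_1p_2^2$ for $n=3$, $p_1^2p_2^2$ for $n=4$, and $(p_1^3-3p_1p_2^2)^2$ for $n=6$. Only the full $SO(2)$-action would reduce the invariants to functions of $p_1^2+p_2^2$ alone. Both invariants depend only on momenta conjugate to cyclic variables, hence automatically Poisson-commute with $H$ and with each other, descend to $M_n^3$, and are functionally independent of each other on the $(p_1,p_2)$-plane and of $H$ (which involves $p_0$). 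This is exactly the paper's argument, and it removes the need for any delicate independence analysis near the pendulum separatrix. Your $n=2$ choice $H,\,p_1^2,\,p_1p_2$ is a valid variant of the paper's $H,\,p_1^2,\,p_2^2$.
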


Indeed, on the quotient threefolds $M^3_n$ as 3 independent Poisson commuting integrals of the geodesic problem one can choose Hamiltonian (\ref{Ham}) and $F_1=p_1, F_2=p_2$ for the general case $n=1$ and $G_n$-invariant functions of momentum in the symmetric cases: $F_1=p_1^2, F_2=p_2^2$ for $n=2$,
$F_1=p_1^2+p_2^2, F_2=p_1^3-3p_1p_2^2$ for $n=3,$  $F_1=p_1^2+p_2^2, F_2=p_1^2p_2^2$ ($n=4$), $F_1=p_1^2+p_2^2, F_2=(p_1^3-3p_1p_2^2)^2$ ($n=6$).

To explain the behaviour of the corresponding geodesics, it is enough to consider the case of the Bianchi torus $M^3_1$. Let us assume first that the lattice $\mathcal L=2\pi \mathbb Z^2  \subset \mathbb R^2(q_1,q_2)$ and that the integral $p_2=c_2=0,$ but $p_1=c_1\neq 0$. Since $q_2$ is bounded, the corresponding geodesics will stay near the subtorus $T^2\subset T_{\mathcal L}^3$ with $q_2=0.$ 
To describe qualitatively the corresponding projections $(q_0, q_1)$  of the geodesics consider a periodic 
surface of revolution in cylindrical coordinates $(\rho, \varphi, z)$ in $\mathbb R^3$ having the form $\rho=f(z)$ with periodic $f(z)=a+b \sin^2z$ with positive $a,b$.

The geodesics on surfaces of revolutions satisfy the classical {\it Clairault's  relation}:
\beq{Clairault}
\mathcal C:=\rho \cos\theta=C
\eeq
is constant along the geodesics, where $\theta$ is the angle between the geodesic and parallel (see e.g. \cite{DC}).
In our case we have two special geodesics, which are parallels with maximal and minimal length (which we call equator and neck respectively).
We have 5 different cases: $$C=R,\,  r<C<R,\, C=r, \, 0<C<r, \, C=0$$ where $r=a$ and $R=a+b$ are radii of neck and equator respectively. 
The behaviour of the corresponding geodesics is shown on Fig 2. 
In the critical case $C=r$ we have closed neck geodesics and the geodesics spiralling towards the neck.


\begin{figure}[h]
  \includegraphics[width=20mm]{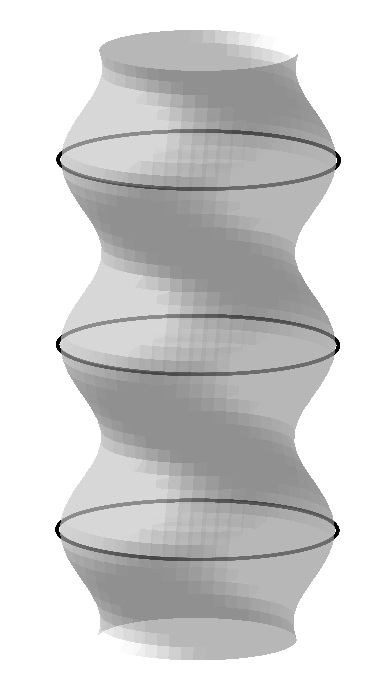} \quad \includegraphics[width=18.5mm]{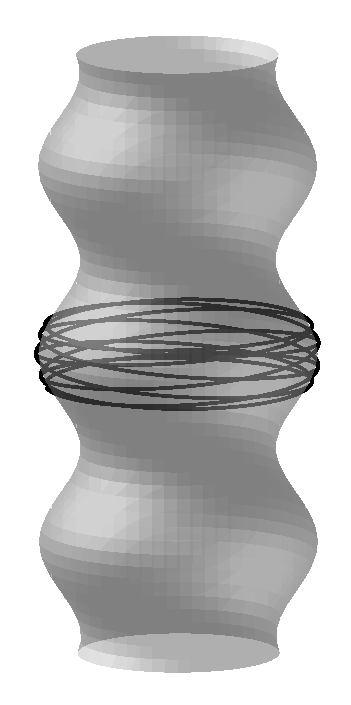} \quad   \includegraphics[width=19.5mm]{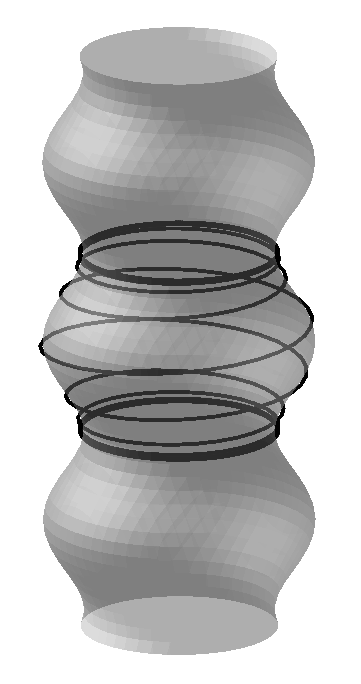}  \quad   \includegraphics[width=20mm]{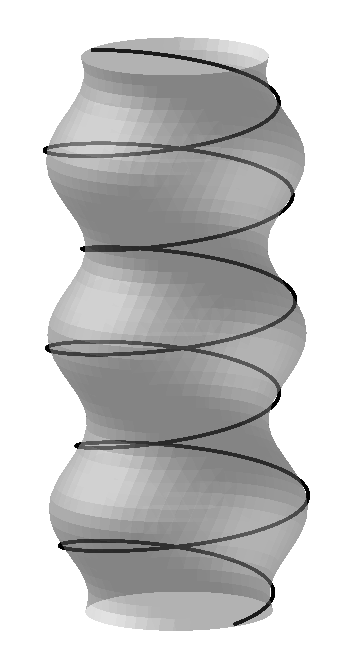}  \quad   \includegraphics[width=19mm]{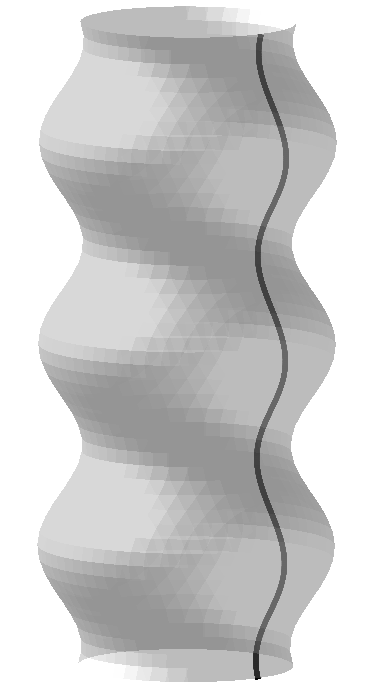} 
  \caption{Geodesics on periodic surface of revolution}
\end{figure}

If we put here $q_0=z\, (mod \, 2\pi), \, q_1=\varphi$, then we get a qualitative description of the motion of the projected geodesics $(q_0,q_1)$ on the torus $T^2.$
More precisely, the first four cases above correspond to the four cases $$2H=p^2, \,\, p^2<2H<Bp^2, \,\, 2H=Bp^2, \,\, 2H>Bp^2,$$ where we assume that $p^2:=p_1^2+p_2^2>0$ and  $B>1$. When $p^2=0$ (meaning that $p_1=p_2=0$) we have the last (meridian) case with $C=0,$ corresponding to geodesics with constant $q_1, q_2$ and uniformly changing $q_0.$

Interestingly enough, in the $Sol$-case \cite{BT} it was the analogue of the last family of simplest geodesics on $M_A^3$, which led to the chaos and positive topological entropy because of the Anosov gluing map $A: T^2\to T^2.$ In our case the gluing map $R_n$ is simply a rotation of order $n,$ which creates no problems for integrability.
 
 For a generic lattice $\mathcal L$ the plane $q_2=0$ is embedded in the Bianchi torus quasi-periodically, so for generic values of integrals with $2H>B(p_1^2+p_2^2)$ the geodesics are quasi-periodic and dense on the torus (similarly to the flat case).

Note that for $B=1$ we do have a flat metric, corresponding to the 3D Euclidean geometry in Thurston's classification. 
Bianchi geometry (\ref{met}) can be viewed as special deformation of this geometry with the smaller symmetry group $E(2)\subset E(3)$, which is not maximal and thus is missing in Thurston's classification. 


We consider now the quantum situation, which more essentially depends on the geometry of the lattice $\mathcal L$ (as we know already in the flat torus case).

\section{Spectral problem on Bianchi-VII$_0$ threefolds}

Consider first the case of Bianchi torus $T_\mathcal L^3$ with metric (\ref{met}). 
Let $$v_1=\omega_{1}^{1}q_1+\omega_{1}^{2}q_2, \, v_2= \omega_{2}^{1}q_1+\omega_{2}^{2}q_2$$ be a basis in the corresponding dual lattice $2\pi \mathcal L^*$:
$$
2\pi \mathcal L^*=\{k_1(\omega_{1}^{1}q_1+\omega_{1}^{2}q_2)+k_2(\omega_{2}^{1}q_1+\omega_{2}^{2}q_2),\,\,\, (k_1,k_2) \in \mathbb Z^2 \}.
$$
By definition, the elements of $\mathcal L^*$ take integer values on the lattice $\mathcal L,$ so 
\beq{integ}
(k, \Omega q):=k_1(\omega_{1}^{1}q_1+\omega_{1}^{2}q_2)+k_2(\omega_{2}^{1}q_1+\omega_{2}^{2}q_2) \in 2\pi \mathbb Z
\eeq
 for all $(k_1,k_2) \in \mathbb Z^2$ and $(q_1,q_2)\in \mathcal L.$

The corresponding Laplace-Beltrami operator on $T_\mathcal L^3$ has the form
\beq{LB}
\Delta = A\frac{\partial^2}{\partial q_0^2}+\frac{\partial^2}{\partial q_1^2}+\frac{\partial^2}{\partial q_2^2}+(B-1)\left(\cos q_0 \frac{\partial}{\partial q_1}-\sin q_0 \frac{\partial}{\partial q_2}\right)^2.
\eeq
This is a self-adjoint operator in the Hilbert space $L_2(T_\mathcal L^3)$ of functions on the torus.
The  spectral geodesic problem is to study the spectrum of $\Delta$ in this space:
\beq{spec}
-\Delta \Psi=\mathcal E \Psi.
\eeq
Because the coefficients of $\Delta$ depend only on $q_0$, we separate variables and look for the eigenfunctions of $\Delta$ of the form
\beq{Psi}
\Psi_k(q)=e^{i (k,\Omega q)} \psi(q_0), \,\, k=(k_1,k_2) \in \mathbb Z^2.
\eeq
The condition (\ref{integ}) guarantees that the corresponding function is well-defined in the Bianchi torus.

Substituting this into  (\ref{spec}) and denoting $x=q_0$ we see that the $\psi(x)$ must satisfy the equation

$$
-A\frac{\dee^2 \psi}{\dee x^2} +[Q(k)+(B-1)((\omega_{1}^{1}k_1+\omega_{2}^{1}k_2)\cos x-(\omega_{1}^{2}k_1+\omega_{2}^{2}k_2)\sin x)^2] \psi(x)=\mathcal E\psi(x),
$$
where 
\beq{Q}
Q(k)=(\omega_{1}^{1}k_1+\omega_{2}^{1}k_2)^2+(\omega_{1}^{2}k_1+\omega_{2}^{2}k_2)^2.
\eeq
Define the angle $\alpha=\alpha(k)$ by the relations $$\omega_{1}^{1}k_1+\omega_{2}^{1}k_2=\sqrt{Q(k)}\cos \alpha,\\ \,\, \omega_{1}^{2}k_1+\omega_{2}^{2}k_2=\sqrt{Q(k)}\sin \alpha,$$ then this equation will become the classical Mathieu equation (shifted by $\alpha$)
\beq{Math}
-\frac{\dee^2 \psi}{\dee x^2} +\mu \cos 2(x+\alpha) \psi(x)=\lambda\psi(x),
\eeq
where
\beq{mu}
\mu:=\frac{(B-1)Q(k)}{2A}, \,\, \lambda:=\frac{2\mathcal E-(B+1)Q(k)}{2A}.
\eeq

\begin{prop}
A function $\Psi$ of the form (\ref{Psi}) is an eigenfunction of the Laplace-Beltrami operator (\ref{LB}) if and only if $\psi(x)$ satisfies the
Mathieu equation (\ref{Math}) with parameters (\ref{mu}).
\end{prop}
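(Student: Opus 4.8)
The plan is to verify directly that the separation of variables leading to the Mathieu equation is \emph{reversible}, so that the equivalence in the Proposition holds in both directions. First I would recall that any function on the Bianchi torus $T_{\mathcal L}^3$ can be expanded in a Fourier series with respect to the $q_1,q_2$ variables, using the characters $e^{i(k,\Omega q)}$, $k \in \mathbb Z^2$, which by condition \mref{integ} are exactly the well-defined exponentials on $T_{\mathcal L}^3$; the periodicity in $q_0$ is automatic. Since the coefficients of $\Delta$ in \mref{LB} depend only on $q_0$, the operator $-\Delta$ preserves each Fourier mode $e^{i(k,\Omega q)}\psi(q_0)$, acting on the $q_0$-factor by the ordinary differential operator
$$
L_k \psi = -A\psi'' + \bigl[Q(k) + (B-1)\bigl((\omega_1^1 k_1+\omega_2^1 k_2)\cos q_0 - (\omega_1^2 k_1 + \omega_2^2 k_2)\sin q_0\bigr)^2\bigr]\psi.
$$
This is precisely the substitution already carried out in the text. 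Hence $-\Delta \Psi_k = \mathcal E \Psi_k$ holds if and only if $L_k \psi = \mathcal E \psi$, and this is a genuine equivalence: the ``if'' direction is the substitution computation, and the ``only if'' direction follows because a single Fourier mode is mapped to a single Fourier mode with no mixing.

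The remaining step is purely trigonometric and algebraic: I would rewrite $L_k \psi = \mathcal E\psi$ in the standard Mathieu form. Writing $(\omega_1^1 k_1+\omega_2^1 k_2)\cos x - (\omega_1^2 k_1 + \omega_2^2 k_2)\sin x = \sqrt{Q(k)}\,(\cos\alpha\cos x - \sin\alpha\sin x) = \sqrt{Q(k)}\cos(x+\alpha)$ using the definition of $\alpha = \alpha(k)$ via \mref{Q}, the potential term becomes $Q(k) + (B-1)Q(k)\cos^2(x+\alpha)$. Applying the identity $\cos^2\vartheta = \tfrac12(1+\cos 2\vartheta)$ turns this into $Q(k) + \tfrac12(B-1)Q(k) + \tfrac12(B-1)Q(k)\cos 2(x+\alpha) = \tfrac12(B+1)Q(k) + \tfrac12(B-1)Q(k)\cos 2(x+\alpha)$. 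Dividing the whole equation $L_k\psi=\mathcal E\psi$ by $A$ and absorbing the constant $\tfrac{(B+1)Q(k)}{2A}$ into the spectral parameter then yields exactly \mref{Math} with $\mu$ and $\lambda$ as in \mref{mu}. Since every manipulation here is an equivalence (multiplication by the nonzero constant $1/A$, a shift of the eigenvalue), the two equations are equivalent, completing the proof.

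There is no serious obstacle here; the statement is essentially a bookkeeping consequence of the separation of variables and a trigonometric normalization. The only point deserving a word of care is the claim that $\alpha(k)$ is well defined: when $Q(k) = 0$ (which for a genuine lattice forces $k=0$) the angle $\alpha$ is undefined, but then $\mu = 0$ and the Mathieu equation degenerates to $-\psi'' = \lambda\psi$, so the statement still holds with any choice of $\alpha$; for $k \neq 0$ one has $Q(k) > 0$ and $\alpha$ is determined modulo $2\pi$, which is all that \mref{Math} requires. One should also note that the equivalence is meant in the sense of formal eigenfunctions: on the analytic side, regularity and the $L^2(T_{\mathcal L}^3)$ condition translate into $\psi$ being a smooth $\pi$-periodic (or, after the shift, $\pi$-antiperiodic, depending on the Floquet sector) solution of \mref{Math}, but the Proposition as stated is just the pointwise identification of the two differential equations, so no spectral-theoretic subtleties enter.
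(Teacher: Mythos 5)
Your proposal is correct and follows essentially the same route as the paper: the paper's justification of this Proposition is precisely the substitution of the Fourier mode $e^{i(k,\Omega q)}\psi(q_0)$ into $-\Delta\Psi=\mathcal E\Psi$ followed by the trigonometric reduction $(\omega_1^1k_1+\omega_2^1k_2)\cos x-(\omega_1^2k_1+\omega_2^2k_2)\sin x=\sqrt{Q(k)}\cos(x+\alpha)$ and the half-angle identity, yielding \mref{Math} with parameters \mref{mu}. Your extra remarks on the reversibility of each step and on the degenerate case $Q(k)=0$ are sound but add nothing beyond the paper's computation.
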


Recall that the classical {\it Mathieu equation} \cite{Mathi} has the form
\beq{clasmat}
\frac{\dee^2 y}{\dee x^2}  +  (\lambda-\mu \cos 2 x) y  = 0.
\eeq Its solutions are known as {\it Mathieu functions}, see  Whittaker-Watson ~\cite{WW} and, for historical perspective, recent  review \cite{BCZ}. We should mention that our notation of the parameters $\lambda$ and $\mu$ are different from $a$ and $q$ (used by Whittaker and Watson \cite{WW}) and $h$ and $\theta$ (used by Erdelyi \cite{Er}).

We will be interested only in $2\pi$-periodic Mathieu functions, which are also eigenfunctions of the Mathieu operator 
\beq{Loper}
L=-\frac{\dee^2}{\dee x^2}+ \mu \cos 2x
\eeq
on the circle $S^1=\mathbb R/2\pi i \mathbb Z.$
For given $\mu$ there are infinite number of values of $\lambda=\lambda_l(\mu), \, l=0,1,\dots$, for which Mathieu equation has periodic solutions, denoted in a properly normalised form as $ce_m(x,\mu), se_m(x,\mu)$, see \cite{BCZ, Er}. 

When $\mu=0$ the spectrum is degenerate:  $\lambda=m^2, \, m=0,1,\dots$  with two-dimensional space of eigenfunctions generated by $\cos mx, \, \sin mx$ (except for $m=0$ corresponding to the ground state $\varphi\equiv 1$). 
The eigenfunctions $ce_m(x,\mu), se_m(x,\mu)$ with the eigenvalues $a_m$ and $b_m$ respectively are analogues of $\cos mx$ and $\sin mx$ for $\mu\neq 0$ having the following properties:
$$
ce_m(-x,\mu)=ce_m(x,\mu), \quad se_m(-x,\mu)=-se_m(x,\mu),
$$
$$
ce_m(x+\pi,\mu)=(-1)^m ce_m(x,\mu), \quad se_m(x+\pi,\mu)=(-1)^m se_m(x,\mu).
$$

\begin{figure}[h]
\begin{center} 
\includegraphics[height=52mm]{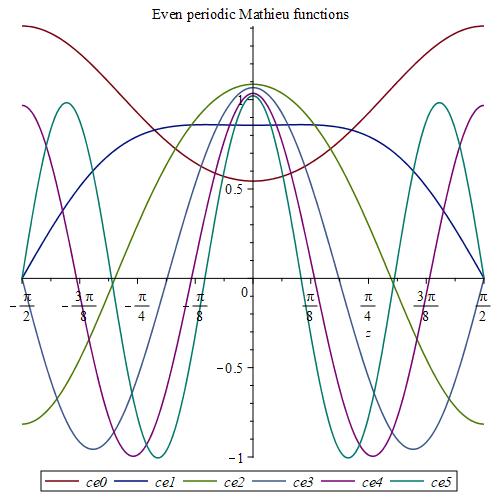}  \hspace{8pt}  \includegraphics[height=52mm]{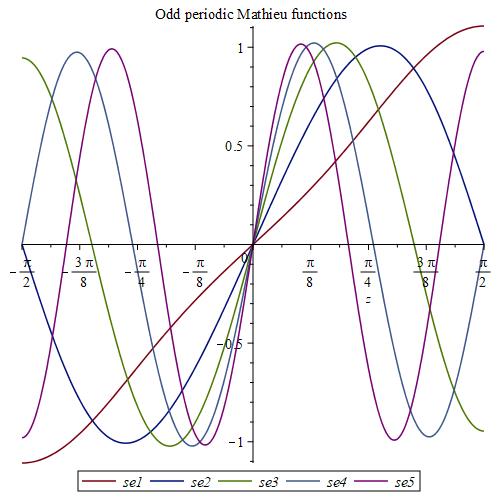}
\caption{\small First Mathieu functions $ce_m(x,\mu), se_m(x,\mu)$ for $\mu=1.$}
\end{center}
\end{figure}

For $\mu<0$ the eigenvalues $a_m=a_m(\mu), \, b_m=b_m(\mu)$ satisfy the inequalities
$$
a_0<b_1<a_1<b_2<a_2<b_3<a_3\dots,
$$
while for $\mu>0$ we have
$$
a_0<a_1<b_1<b_2<a_2<a_3<b_3<\dots
$$
(see \cite{Er}). Asymptotically for large $m$ $a_m(\mu)$ and $b_m(\mu)$ are exponentially close to $m^2.$

We denote these eigenvalues in the increasing order as $\lambda_l(\mu), \, l\in \mathbb Z_{\geq 0}$ and the eigenfunctions respectively as $\varphi_l(x,\mu):$ 
\beq{Lphi}
L \varphi_l(x,\mu)=\lambda_l(\mu) \varphi_l(x,\mu).
\eeq


Let
\beq{eigenf}
\Psi_{k,l}(q):=e^{i (k,\Omega q)} \varphi_l(q_0,\mu)
\eeq
be the corresponding eigenfunctions of the Laplace-Beltrami operator on Bianchi torus $T_\mathcal L^3$ labelled by $k=(k_1,k_2)\in \mathbb Z^2_\times:=\mathbb Z^2\setminus {(0,0)}$ and $n\in \mathbb Z_{\geq 0}$.
We consider separately the eigenfunctions related to $k=(0,0)$
\beq{eigenf0}
\Psi_{0,m} (q):=e^{im q_0}, \quad m \in {\mathbb Z}
\eeq
with the eigenvalues $\lambda_m = m^2.$

\begin{theorem}
The eigenfunctions (\ref{eigenf}),(\ref{eigenf0}) of the Laplace-Beltrami operator (\ref{LB})
 form a complete basis in the Hilbert space 
$L_2(T_\mathcal L^3)$ of functions on Bianchi torus $T_\mathcal L^3.$
For $B\neq 1$ the corresponding spectrum $Spec= Spec_0 \cup Spec_M$ consists of two parts:
\beq{spec0}
Spec_0 =\{
\mathcal E_m = A m^2 ,\quad m \in \mathbb Z_{\geq 0} \},
\eeq
\beq{specm}
Spec_M = \{\mathcal E_{k, l} = A\lambda_l(\mu)+\frac{(B+1)Q(k)}{2}, \, \, k\in \mathbb Z^2_\times, \, l \in \mathbb Z_{\geq 0}\},
\eeq
where $\lambda_l$ are the eigenvalues of the Mathieu operator (\ref{Loper}) with $\mu=\frac{(B-1)Q(k)}{2A},$ $Q(k)$ is the quadratic form of the lattice $\mathcal L.$
\end{theorem}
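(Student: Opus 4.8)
The plan is to use that the Laplace--Beltrami operator (\ref{LB}) has coefficients depending only on $q_0$ and is invariant under the translations in the $(q_1,q_2)$-directions, which reduces the spectral problem to a family of one-dimensional Mathieu operators indexed by $k\in\mathbb Z^2$. Concretely, since $2\pi\mathcal L^*$ is the dual lattice, the exponentials $e^{i(k,\Omega q)}$, $k\in\mathbb Z^2$, form a complete orthogonal system on the torus $\mathbb R^2/\mathcal L$, so that
\[
L_2(T_{\mathcal L}^3)=\bigoplus_{k\in\mathbb Z^2}\mathcal H_k,\qquad
\mathcal H_k=\{\,f(q_0)\,e^{i(k,\Omega q)}\ :\ f\in L_2(S^1)\,\},
\]
an orthogonal Hilbert-space direct sum. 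Because every coefficient in (\ref{LB}) depends only on $q_0$, the operator $\Delta$ commutes with translations in $q_1,q_2$, hence preserves each summand $\mathcal H_k$, and the spectral problem splits into the problems for the restrictions $\Delta|_{\mathcal H_k}$.

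Next I would identify these restrictions. By the substitution carried out just before the previous Proposition, on $\mathcal H_k$ the operator $-\Delta$ acts on the factor $f(q_0)$ as
\[
-A\,\frac{\dee^2}{\dee q_0^2}+(B-1)\bigl((\omega_1^1k_1+\omega_2^1k_2)\cos q_0-(\omega_1^2k_1+\omega_2^2k_2)\sin q_0\bigr)^2+Q(k);
\]
completing the square and applying the shift $q_0\mapsto q_0-\alpha(k)$ turns this into $A\,L+\tfrac{B+1}{2}Q(k)\cdot\mathrm{id}$, where $L$ is the Mathieu operator (\ref{Loper}) with $\mu=\mu(k)=\frac{(B-1)Q(k)}{2A}$, in agreement with (\ref{mu}). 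For $k=(0,0)$ we have $Q(k)=0$, $\mu=0$, the potential vanishes and $-\Delta|_{\mathcal H_0}=-A\,\partial_{q_0}^2$, whose eigenfunctions $e^{imq_0}$ have eigenvalues $Am^2$, giving (\ref{eigenf0}) and (\ref{spec0}). For $k\in\mathbb Z^2_\times$ the form $Q$ is positive definite, the matrix $\Omega$ built from a lattice basis being invertible; hence $Q(k)>0$, and, as $B\neq1$, also $\mu(k)\neq0$.

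Now I would invoke the classical spectral theory of the Mathieu operator: $L=-\dee^2/\dee x^2+\mu\cos 2x$ on $S^1=\mathbb R/2\pi\mathbb Z$ is a self-adjoint one-dimensional Schr\"odinger operator with smooth periodic potential, so it has compact resolvent and an orthonormal eigenbasis $\{\varphi_l(x,\mu)\}_{l\ge0}$ of $L_2(S^1)$ satisfying (\ref{Lphi}) (see \cite{WW, Er, BCZ}). Conjugating back by the translation by $\alpha(k)$ --- a unitary of $L_2(S^1)$, so that completeness is preserved and the shift need not appear explicitly in (\ref{eigenf}) --- the functions $\varphi_l(q_0,\mu(k))\,e^{i(k,\Omega q)}$, $l\ge0$, form a complete orthonormal basis of $\mathcal H_k$ with $\Delta$-eigenvalue $\mathcal E_{k,l}=A\lambda_l(\mu(k))+\tfrac{B+1}{2}Q(k)$, read off by inverting (\ref{mu}). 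Taking the orthogonal union over $k\in\mathbb Z^2$ yields a complete orthonormal basis of $L_2(T_{\mathcal L}^3)$ consisting of the functions (\ref{eigenf}) and (\ref{eigenf0}), and $Spec=Spec_0\cup Spec_M$ with the stated formulas (\ref{spec0}), (\ref{specm}).

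The point requiring care is the reduction itself rather than any deep difficulty: one must check that the Fourier decomposition genuinely reduces $\Delta$ --- that each $\Delta|_{\mathcal H_k}$ is essentially self-adjoint with discrete spectrum and that these spectra, counted with multiplicity, assemble to the full spectrum of the elliptic operator $\Delta$ on the compact manifold without losing eigenfunctions --- and to keep careful track of the $k$-dependent shift $\alpha(k)$ and of the degenerate fibre $k=(0,0)$. Everything else is the classical Sturm--Liouville theory of the Mathieu operator together with the affine reparametrisation already recorded in (\ref{mu}).
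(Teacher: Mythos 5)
Your proposal is correct and follows essentially the same route as the paper: both arguments rest on the Fourier decomposition of $L_2(T_{\mathcal L}^3)$ in the $(q_1,q_2)$-variables combined with the completeness of the periodic Mathieu eigenfunctions on the circle in each Fourier mode; the paper phrases this as ``any function orthogonal to all eigenfunctions vanishes,'' while you phrase it as an orthogonal direct-sum decomposition preserved by $\Delta$, which is the same argument. Your explicit handling of the shift $\alpha(k)$ by a unitary translation is a worthwhile clarification of a point the paper leaves implicit in (\ref{eigenf}).
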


\begin{proof}
We need to show the completeness of the set of our eigenfunctions. For this it is enough to prove that any smooth function $\Phi: M^3\to
\mathbb R$ which is orthogonal to each of these eigenfunctions
is zero.

Denoting again $q_0=x$ expand the function $\Phi(x, q_1, q_2)$ as a Fourier series with respect to $q_1,q_2:$
$$
\Phi(x, q_1, q_2)=\sum_{s=(s_1,s_2)\in \mathbb Z^2} a_s(x) e^{i(s,\Omega q)}
$$
with some smooth $2\pi$-periodic coefficients $a_s(x), \, x \in \mathbb R.$
Orthogonality of $\Phi(x, q_1, q_2)$ to all eigenfunctions $\Psi_{k,l}(q)$ implies that for all non-zero $s\in \mathbb Z^2$ Fourier coefficient $a_s(x)$ is orthogonal to all Mathieu functions $\varphi_l(x,\mu)$ for all $l\in \mathbb Z_{\geq 0}.$ Since the system of the Mathieu functions is complete in the Hilbert space of $2\pi$-periodic functions  this implies that the corresponding $a_s(x)\equiv 0$ and thus $\Phi=a_0(x)$.
Now orthogonality to $\Psi_{0,l} (q)$ means that $a_0(x)$ must vanish and thus $\Phi \equiv 0.$
\end{proof}

\begin{cor}
The spectrum of the Bianchi torus with metric (\ref{met}) behaves asymptotically as the values on 3D integer lattice of the quadratic form
\beq{tilQ}
\tilde Q(k_0,k_1,k_2)= Ak_0^2+\frac{(B+1)Q(k_1,k_2)}{2}, \quad (k_0,k_1,k_2) \in \mathbb Z^3
\eeq
with quadratic form $Q(k_1,k_2)$ given by (\ref{Q}).
\end{cor}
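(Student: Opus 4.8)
The idea is to read the asymptotics off the explicit description of the spectrum in the Theorem. There $Spec=Spec_0\cup Spec_M$, and since $Spec_0=\{Am^2:m\in\mathbb Z_{\geq 0}\}$ has only $O(\sqrt{\mathcal E})$ points below level $\mathcal E$, it is negligible at the order of interest; the whole question thus reduces to the set $Spec_M$ of numbers $\mathcal E_{k,l}=A\lambda_l(\mu_k)+\tfrac{B+1}{2}Q(k)$ with $k\in\mathbb Z^2_\times$, $l\in\mathbb Z_{\geq 0}$ and $\mu_k=\tfrac{B-1}{2A}Q(k)$, that is, to the behaviour of the Mathieu eigenvalues $\lambda_l(\mu)$ both for $l$ large and for $l$ small compared with $\sqrt{|\mu|}$.

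The first and main step is the ``above--barrier'' range $l\gg\sqrt{|\mu_k|}$. Here I would use that $\lambda_l(\mu)$ is exponentially close to $\lceil l/2\rceil^2$ --- this is the statement recalled before \mref{Lphi} that $a_m(\mu)$ and $b_m(\mu)$ approach $m^2$, the two indices $l=2m-1,2m$ corresponding to the pair converging to $m^2$. Consequently, in this range,
$$
\mathcal E_{k,l}=A\lceil l/2\rceil^2+\tfrac{B+1}{2}Q(k)+o(1)=\tilde Q\big(\lceil l/2\rceil,\,k_1,\,k_2\big)+o(1),
$$
and the pair $l=2m-1,2m$ reproduces the single value $\tilde Q(m,k_1,k_2)=\tilde Q(-m,k_1,k_2)$, matching the $k_0\mapsto-k_0$ symmetry of $\tilde Q$. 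Thus this portion of $Spec_M$ is, term by term, asymptotic to the multiset of integer values of $\tilde Q$ with $Ak_0^2>A\mu_k$, which is the precise sense in which the spectrum ``behaves as the values of $\tilde Q$''.

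It then remains to control the ``below--barrier'' eigenvalues $l\lesssim\sqrt{|\mu_k|}$ (for $B>1$, the $\sim\sqrt{\mu_k}$ states trapped near the minima of $\mu_k\cos2x$) and to verify completeness of the count. For this I would invoke the one--dimensional Weyl law for $L=-\partial_x^2+\mu\cos2x$ on $S^1$, namely $\#\{l:\lambda_l(\mu)\le t\}=\tfrac1\pi\int_0^{2\pi}\sqrt{(t-\mu\cos2x)_+}\,dx+O(1)$ with the error uniform in $\mu$ and $t$; substituting $t=\tfrac{2\mathcal E-(B+1)Q(k)}{2A}$, summing over $k\in\mathbb Z^2_\times$ against the density $\tfrac{\pi}{\sqrt{\det Q}}\,ds$ of the values $s=Q(k)$, and carrying out the (elementary) $x$-- and $s$--integrals reproduces the Weyl asymptotics $N(\mathcal E)\sim\frac{\mathrm{Vol}(M^3_1)}{6\pi^2}\,\mathcal E^{3/2}$. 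Since the metric \mref{met} has $\sqrt{\det g}\equiv(AB)^{-1/2}$, so that $\mathrm{Vol}(M^3_1)=2\pi(AB)^{-1/2}\,\mathrm{covol}(\mathcal L)=8\pi^3/\sqrt{AB\det Q}$, the counting function has exactly the $\mathcal E^{3/2}$ growth of the counting function of a positive--definite ternary quadratic form --- the same growth as $\#\{\tilde Q(k)\le\mathcal E\}$ --- with all the arithmetic, hence the level--spacing, content carried by the single binary form $Q$, exactly as it enters $\tilde Q$.

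The step I expect to be the real obstacle is the uniformity of the Mathieu asymptotics in the parameter $\mu$: one needs not just $\lambda_l(\mu)\to\lceil l/2\rceil^2$ for fixed $\mu$, but control of the eigenvalue count when $\mu$ and $l$ grow together --- the regime $Q(k)\asymp\mathcal E$ that actually dominates the sum --- which is a turning--point/WKB analysis of the Mathieu equation near the top of its barrier $\lambda=\mu$.
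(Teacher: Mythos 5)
Your opening step --- reading the asymptotics off the Theorem via $\lambda_l(\mu)\approx\lceil l/2\rceil^2$ and discarding the $O(\sqrt{\mathcal E})$ points of $Spec_0$ --- is exactly the paper's entire argument: the justification following the Corollary simply says that the Mathieu spectrum behaves asymptotically as $m^2$ and notes that for $B=1$ the statement is exact. Everything after that in your plan (the below-barrier count, the Weyl law, the uniformity discussion) goes beyond what the paper does, and your closing remark correctly identifies the point the paper glosses over: the parameter $\mu_k=\frac{(B-1)Q(k)}{2A}$ grows with $k$, so in the dominant regime $Q(k)\asymp\mathcal E$ the fixed-$\mu$ asymptotics of the Mathieu eigenvalues do not apply directly.

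One caveat on your first display: the $o(1)$ is too strong for the range $l\gg\sqrt{|\mu_k|}$ as you state it. What is exponentially small (the paper's ``exponentially close'') is really the splitting $a_m(\mu)-b_m(\mu)$; the deviation $a_m(\mu)-m^2$ is only of order $\mu^2/m^2$, which is $o(1)$ only when $m\gg|\mu_k|$, i.e. $l\gg Q(k)$ --- a far thinner range, covering only the lattice points with $Q(k)=O(\sqrt{\mathcal E})$ below level $\mathcal E$. So the term-by-term matching with the values of $\tilde Q$ is not literally available where you claim it; this is the same uniformity issue you flag at the end, and it is why the Corollary should be read (as the paper implicitly reads it) as a leading-order/heuristic statement rather than a term-by-term one. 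Your Weyl-law computation is sound in spirit and does confirm the $\mathcal E^{3/2}$ growth with the right constant, but that only pins down the leading-order density, which any positive definite ternary form shares; it does not by itself transfer the arithmetic (multiplicity and level-spacing) structure of $\tilde Q$ to the spectrum. The paper attempts neither refinement, so on this point your proposal is more honest about what remains to be proved than the source it is reconstructing.
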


Indeed, it is known that the Mathieu spectrum behaves asymptotically as $m^2, \, m \in \mathbb Z.$ When $B=1$ we have the flat torus with the spectrum given by the values of the corresponding form $\tilde Q=Ak_0+Q(k_1,k_2)$, so in that case the result is exact. The modification (\ref{tilQ}) of this form for general  $B$ is non-obvious and seems to be quite interesting.

Consider now the case of the quotient of Bianchi torus $M_2^3=T_\mathcal L^3/\mathbb Z_2$ by the involution $\tau: (q_0,q_1,q_2)\to (q_0+\pi, -q_1,-q_2).$
Corresponding eigenfunctions can be found by the averaging:
\beq{eigenf2}
\Psi^{(2)}_{k,l}(q):=e^{i (k, \Omega q)} \varphi_l(q_0,\mu)+e^{-i (k,\Omega q)} \varphi_l(q_0+\pi,\mu)
\eeq

More explicitly, for $\mu<0$ (corresponding to $B<1$) we have the functions
$$
\varPhi^{(2)}_{k, 4j}=\cos (k,\Omega q) ce_{2j}(x,\mu),\, \varPhi^{(2)}_{k, 4j+1}=\sin (k,\Omega q) se_{2j+1}(x,\mu),$$
$$
\varPhi^{(2)}_{k, 4j+2}=\sin (k,\Omega q) ce_{2j+1}(x,\mu),\, \varPhi^{(2)}_{k, 4j+3}=\cos (k,\Omega q) se_{2j+2}(x,\mu),
$$
where $k \in \mathbb Z^2_\times/\pm$ is considered up to a sign and $j \in \mathbb Z_{\geq 0}$.

The corresponding spectrum is given by the same formula (\ref{specm}), although the multiplicities for a generic lattice drop from 2 to 1 (since $k$ and $-k$ are now identified). For non-generic lattices the multiplicities could be higher, see next section.

The trivial part is to be modified to $Spec^{(2)}_0 =\{
\mathcal E_m = 4Am^2 ,\quad m \in \mathbb Z_{\geq 0} \},$
with the multiplicities 2 (except ground state $\mathcal E=0$ having multiplicity 1) and the eigenfunctions
$e^{\pm 2im q_0}.$

The remaining cases correspond to the symmetric lattices and have nice arithmetic properties. 

\section{Arithmetic of the symmetric Bianchi-VII$_0$ threefolds}

 Consider now the symmetric cases of Bianchi tori with the lattices having additional rotational symmetries of order $n>2.$
 When $n=4$ we have a square lattice, which can be assumed to be the standard square lattice $\mathcal L=2\pi \mathbb Z^2 \subset \mathbb R^2(q_1,q_2)$
with quadratic form
$
 Q(k)=k_1^2+k_2^2.
$
 
 The same calculations as before lead to the eigenfunctions
\beq{eigenf4}
\Psi_{k,l}(q):=e^{i (k_1q_1+k_2 q_2)} \varphi_l(\mu, q_0)
\eeq
be the corresponding eigenfunctions of the Laplace-Beltrami operator on $M^3$ labelled by $k=(k_1,k_2)\in \mathbb Z^2_\times:=\mathbb Z^2\setminus {(0,0)}$ and $l\in \mathbb Z_{\geq 0}$.
Together with the elementary eigenfunctions
$
\Psi_{0,m} (q):=e^{im q_0}, \, m \in {\mathbb Z}
$
with the eigenvalues $\lambda_m = A m^2$, they form a complete basis in the $L_2$ Hilbert space on the corresponding Bianchi torus.
The nontrivial part of the spectrum (\ref{specm}) has the form
\beq{nont}
\mathcal E_{K, l} = A\lambda_l(\mu)+(B+1)K/2,
\eeq
where  $K=k_1^2+k_2^2, \,\,  l \in \mathbb Z_{\geq 0}$ and $\lambda_l(\mu)$ are the eigenvalues of Mathieu operator (\ref{Loper}) with $\mu=\frac{(B-1)K}{2A}.$
Asymptotically the spectrum behaves like the values of the quadratic form
$$
\tilde Q=Ak_0^2+(B+1)(k_1^2+k_2^2)/2, \,\, (k_0,k_1,k_2)\in \mathbb Z^3.
$$

The difference with the general case is that now we have high multiplicities. Indeed, a given integer $K=k_1^2+k_2^2$ may be represented as sum of the squares
in many ways. This classical problem was studied by many mathematicians, including Fermat, Euler and Gauss.
Jacobi found the following formula for the number $N(K)$ of such representations (see e.g. \cite{Dick}):
\beq{gauss}
N(K) = 4(N_1(K)
- N_3(K)),
\eeq
where $N_1(K)$ and $N_3(K)$ are the numbers of the
divisors of $K$ with the residues $1$ and $3$ modulo 4
respectively. Another classical result due to E. Landau \cite{Landau} claims that the number of
integers $K\leq N$ represented as the sum of two squares grows as
$N/\sqrt{\log N}.$

Similar result holds for the Bianchi torus with the regular hexagonal lattice (covering  $M^3_n$ with $n=3$ and $n=6$) 
when the non-trivial eigenvalues have the form (\ref{nont}) with $K=k_1^2+k_1k_2+k_2^2, \, (k_1,k_2)\in \mathbb Z^2.$
In that case the number of representations $N(K)$ of a given integer $K$ by this form
is also well-known:
\beq{gauss2}
N(K) = 6(N_1(K)
- N_2(K)),
\eeq
where $N_1(K)$ and $N_{2}(K)$ are the numbers of the
divisors of $K$ with the residues $1$ and $2$ modulo 3
respectively (see \cite{Dick}). The corresponding generalisation of Landau result was proved by Pall \cite{Pall}.

\begin{theorem}
For generic parameters $A$ and $B$ in the metric the multiplicity of the eigenvalues (\ref{nont})
of the Laplace-Beltrami operator on the symmetric Bianchi-VII$_0$ threefolds $M_n^3$ with $n=3,4,6$
are
$$
m(\mathcal E_{K,l})=\frac{N(K)}{n},
$$
where $N(K)$ are given by (\ref{gauss}), (\ref{gauss2}) respectively.
\end{theorem}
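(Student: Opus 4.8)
The plan is to realize the spectrum of $M^3_n=T^3_{\mathcal L}/\mathbb Z_n$ ($n=3,4,6$) as the $\mathbb Z_n$-invariant part of the already-described spectrum of the Bianchi torus $T^3_{\mathcal L}$, and to compute the action of the deck group $\mathbb Z_n=G_n/\mathcal L$ on eigenspaces. Fix $(K,l)$ with $K$ represented by the lattice form $Q$, put $S_K:=\{k\in\mathbb Z^2:Q(k)=K\}$ (so $|S_K|=N(K)$), and let $V_{K,l}:=\operatorname{span}\{\Psi_{k,l}:k\in S_K\}\subset L_2(T^3_{\mathcal L})$. Since the frequency map $k\mapsto\Omega^{T}k$ is injective, the functions $\Psi_{k,l}=e^{i(k,\Omega q)}\varphi_l(q_0+\alpha(k),\mu)$ have pairwise distinct $(q_1,q_2)$-frequencies, hence are linearly independent, so $\dim V_{K,l}=N(K)$; by \mref{specm} they all carry the $\Delta$-eigenvalue $\mathcal E_{K,l}$. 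Granting for the moment that distinct labels yield distinct eigenvalues (third paragraph), the $\mathcal E_{K,l}$-eigenspace of $\Delta$ on $T^3_{\mathcal L}$ is exactly $V_{K,l}$; since $L_2(M^3_n)=L_2(T^3_{\mathcal L})^{\mathbb Z_n}$ and $V_{K,l}$ is $\mathbb Z_n$-invariant, the number to compute is $m(\mathcal E_{K,l})=\dim V_{K,l}^{\mathbb Z_n}$.

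The key step is to identify the action of a generator $g$ of $\mathbb Z_n$, which is the isometry of the metric \mref{met} of the form $g:(q_0,q_1,q_2)\mapsto(q_0+\tfrac{2\pi}{n},\hat R(q_1,q_2))$, where $\hat R$ is rotation by $-\tfrac{2\pi}{n}$ (the involution $\tau$ when $n=2$); invariance of \mref{met} follows from invariance of the $1$-form $\sin q_0\,dq_1+\cos q_0\,dq_2$ under $g$. Being an isometry, $g^{*}$ is unitary on $L_2(T^3_{\mathcal L})$ and preserves $V_{K,l}$. Applying $g^{*}$ to $\Psi_{k,l}$, the exponential factor acquires frequency $\hat R^{-1}\Omega^{T}k=\Omega^{T}k'$, where $k'\in S_K$ is the image of $k$ under the $\tfrac{2\pi}{n}$-rotation, while the shift $q_0\mapsto q_0+\tfrac{2\pi}{n}$ is exactly absorbed into the Mathieu factor, whose argument involves only $q_0+\alpha(k)$ with $\alpha(k)$ the polar angle of $\Omega^{T}k$ (indeed $\alpha(k')=\alpha(k)+\tfrac{2\pi}{n}$); hence $g^{*}\Psi_{k,l}=\Psi_{k',l}$. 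Thus $g^{*}$ acts on $V_{K,l}$, in the basis $\{\Psi_{k,l}\}_{k\in S_K}$, by the permutation of $S_K$ induced by the rotation, and this permutation is \emph{free} since no nonzero planar vector is fixed by a nontrivial rotation. Consequently $S_K$ splits into $N(K)/n$ orbits (in particular $n\mid N(K)$, consistent with \mref{gauss}, \mref{gauss2}), and $V_{K,l}^{\mathbb Z_n}$ is spanned by the $N(K)/n$ orbit sums $\sum_{j=0}^{n-1}(g^{*})^{j}\Psi_{k,l}$; equivalently, by the character formula
\[
m(\mathcal E_{K,l})=\dim V_{K,l}^{\mathbb Z_n}=\frac1n\sum_{j=0}^{n-1}\operatorname{tr}\bigl((g^{*})^{j}\bigr|_{V_{K,l}})=\frac1n\,N(K),
\]
since for $1\le j\le n-1$ the matrix of $(g^{*})^{j}$ has no diagonal entries (freeness) while $\operatorname{tr}(\operatorname{id})=N(K)$.

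It remains to justify the genericity used above: for generic $A,B$ the value $\mathcal E_{K,l}$ coincides with no $\mathcal E_{K',l'}$, $(K',l')\ne(K,l)$, and with no elementary eigenvalue $An^2m^2$. Each $f_{K,l}(A,B):=A\lambda_l(\mu)+\tfrac{(B+1)K}{2}$ with $\mu=\tfrac{(B-1)K}{2A}$ is real-analytic on $(0,\infty)^2$, so the locus where two of them agree is a countable union of proper real-analytic subvarieties once one knows no two agree identically. For $K=K'$, $l\ne l'$ this holds for every $B\ne1$ because the Mathieu eigenvalues $\lambda_l(\mu)$ are pairwise distinct for $\mu\ne0$. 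For $K\ne K'$, letting $A\to\infty$ with $B\ne1$ fixed gives $\mu\to0$ and $\lambda_l(\mu)=\lambda_l(0)+\lambda_l'(0)\mu+O(A^{-2})$, whence $f_{K,l}=A\lambda_l(0)+\bigl(\tfrac{B-1}{2}\lambda_l'(0)+\tfrac{B+1}{2}\bigr)K+O(A^{-1})$; equating two such expansions as asymptotic polynomials in $A$ and $B$ forces $K=K'$, a contradiction. The elementary eigenvalues are excluded the same way. On the resulting full-measure set of $(A,B)$ the multiplicity of $\mathcal E_{K,l}$ is therefore exactly $N(K)/n$.

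The main obstacle is the middle step: showing that the deck transformation acts on $V_{K,l}$ precisely as the rotation permutation of $S_K$ with no residual mixing of the $\Psi_{k,l}$, which rests on the observation that the $q_0$-shift by $2\pi/n$ is exactly compensated by the $2\pi/n$-rotation of the frequency vector inside the Mathieu argument $q_0+\alpha(k)$; once this is in hand the count is a one-line orbit (or character) computation, and the freeness of the rotation action supplies the factor $1/n$ uniformly. The secondary technical point is the genericity argument, which relies on the simplicity of the Mathieu spectrum away from $\mu=0$ and on its first-order behaviour as $\mu\to0$.
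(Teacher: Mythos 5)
Your argument is correct and follows the same route as the paper: realize $L_2(M^3_n)$ as the $\mathbb Z_n$-invariant part of $L_2(T^3_{\mathcal L})$, observe that the deck transformation permutes the basis $\{\Psi_{k,l}\}_{Q(k)=K}$ via the rotation acting on $S_K$ (the $q_0$-shift being absorbed into the phase $\alpha(k)$ of the Mathieu factor), and count the $N(K)/n$ free orbits. You additionally make explicit two points the paper leaves implicit --- the exact compensation between the $q_0$-translation and the rotation of the frequency vector, and the genericity argument (simplicity of the Mathieu spectrum for $\mu\neq 0$ plus the $A\to\infty$ asymptotics) ruling out accidental coincidences among the $\mathcal E_{K,l}$ --- both of which are sound.
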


Indeed, the eigenfunctions of the Laplace-Beltrami operator on the corresponding Bianchi threefolds $M_n^3=T_\mathcal L^3/\mathbb Z_n$
can be found by averaging the eigenfunctions on the corresponding Bianchi torus $T_\mathcal L$ by the group generated by
$
\tau: (q_0, q_1, q_2) \to (q_0+2\pi/n, R_n(q_1,q_2))
$
with the rotation $R_n$ given by (\ref{form}).
The trivial part of the spectrum consists of the eigenvalues
$\mathcal E_m = A m^2n^2 ,\quad m \in \mathbb Z_{\geq 0}$
with the multiplicities 2 (for $m>0$), corresponding to the eigenfunctions
$e^{\pm 2imn q_0}.$

As a corollary of the number-theoretic results of Landau and Pall \cite{Landau, Pall} we have now the following result (cf. \cite{BDV}).

\begin{cor}
The level spacing distribution for the spectrum of the symmetric Bianchi tori and threefolds $M_n^3, \, n=3,4,6$ is not Poisson
and hence the Berry-Tabor conjecture does not hold in this case. 
\end{cor}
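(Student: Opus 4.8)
The plan is to show that these spectra are so degenerate that an asymptotically full proportion of consecutive level spacings vanish, which is incompatible with the Poisson statistics predicted by Berry and Tabor \cite{Berry-Tabor} for generic integrable systems; the argument follows the pattern of the $Sol$-case in \cite{BDV}.

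First I would isolate the source of the degeneracy. By (\ref{specm})--(\ref{nont}), apart from the trivial part $\{An^2m^2\}$, which contributes only $O(\sqrt E)$ eigenvalues below a cutoff $E$ and is negligible in what follows, the spectrum consists of the numbers $\mathcal E_{K,l}=A\lambda_l(\mu)+(B+1)K/2$ with $\mu=(B-1)K/(2A)$. These depend only on the value $K$ of the quadratic form $Q$ and on the Mathieu index $l$, and not at all on the particular $k\in\mathbb Z^2$ realising $K=Q(k)$. Hence, by the preceding Theorem, each $\mathcal E_{K,l}$ occurs with multiplicity $N(K)/n$, and for generic $A,B$ distinct pairs $(K,l)$ produce distinct eigenvalues (accidental coincidences at exceptional parameters would only strengthen the conclusion).

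Next I would estimate the number of \emph{distinct} eigenvalues below $E$. The crude min--max bounds $\big|\lambda_l(\mu)-\lceil l/2\rceil^2\big|\le|\mu|$ for the Mathieu operator (\ref{Loper}), combined with (\ref{mu}), yield $\mathcal E_{K,l}\ge\min(1,B)\,K+\tfrac{A}{4}l^2$, so $\mathcal E_{K,l}\le E$ forces $K\le C_1E$ and $l\le C_2\sqrt E$ for constants $C_1,C_2$ depending only on $A,B$. Therefore the number of distinct eigenvalues below $E$ is at most the number of integers $K\le C_1E$ represented by $Q$, multiplied by $C_2\sqrt E+1$. By Landau's theorem \cite{Landau} in the case $Q=k_1^2+k_2^2$ ($n=4$) and by Pall's extension \cite{Pall} in the case $Q=k_1^2+k_1k_2+k_2^2$ ($n=3,6$), the number of integers up to $X$ represented by $Q$ is $\sim cX/\sqrt{\log X}$; consequently the number of distinct eigenvalues below $E$ is $O\!\big(E^{3/2}/\sqrt{\log E}\big)$. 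On the other hand, by the standard Weyl asymptotics the number $N(E):=\#\{j:\lambda_j\le E\}$ of eigenvalues below $E$ counted with multiplicity grows like $c_ME^{3/2}$ with $c_M>0$ (one may also get the lower bound $\gg E^{3/2}$ directly from a Gauss-circle count summed over $l\le C_2\sqrt E$).

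Finally, ordering the eigenvalues with multiplicity as $0\le\lambda_1\le\lambda_2\le\cdots$, the number of strictly positive gaps $\lambda_{j+1}-\lambda_j$ among the first $N(E)$ of them is one less than the number of distinct values below $E$, so the proportion of vanishing gaps is $1-O\big(1/\sqrt{\log E}\big)\to1$. Since unfolding to unit mean density is a monotone rescaling and does not change which gaps vanish, the empirical level-spacing distribution $P_E$ satisfies $P_E(s)\to1$ for every $s\ge0$, whereas the Poisson law would give $P_E(s)\to1-e^{-s}<1$; thus the level spacing is not Poisson and Berry--Tabor fails. The same argument applies verbatim to the symmetric Bianchi tori, with $N(K)$ in place of $N(K)/n$. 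The one point requiring care is the control of the sublevel set $\{\mathcal E_{K,l}\le E\}$ in \emph{both} variables: one must exploit the exact cancellation $(B+1)/2-|B-1|/2=\min(1,B)>0$ together with the bound $\lambda_l(\mu)\le\lceil l/2\rceil^2+|\mu|$, since $\lambda_l(\mu)$ is very negative in the semiclassical regime $|\mu|\gg l^2$ and a naive estimate would fail to constrain $K$.
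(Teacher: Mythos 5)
Your proof is correct and follows the same route as the paper, which argues in two lines that the Landau--Pall density results combined with the multiplicity formula $m(\mathcal E_{K,l})=N(K)/n$ force the overwhelming majority of level spacings to vanish, contradicting the Poisson prediction. Your write-up merely supplies the quantitative details the paper leaves implicit --- notably the min--max bound $|\lambda_l(\mu)-\lceil l/2\rceil^2|\le|\mu|$ needed to control the sublevel set $\{\mathcal E_{K,l}\le E\}$ when $\mu$ grows with $K$, and the comparison of the $O(E^{3/2}/\sqrt{\log E})$ count of distinct eigenvalues against the Weyl count $\asymp E^{3/2}$ --- and these details check out.
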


Recall that the Berry-Tabor conjecture \cite{Berry-Tabor} predicts that the quantisation of completely integrable Hamiltonian systems should have Poisson distributed level spacing, which means their consecutive spacings should be independent and identically distributed.
In our case the growing degeneracy means that most of level spacings are zero, which contradicts to Berry-Tabor conjecture. 

\section{Comparison with Bianchi-VI and Bianchi-IX cases}

The classical $Sol$-case (corresponding to Bianchi-VI$_0$ type) was studied by Bolsinov and Taimanov in \cite{BT}.
The principal examples here are the torus $T^2$-fibrations $M^3_A$ over $S^1$ with hyperbolic gluing maps $A:T^2 \rightarrow T^2, \, A\in SL(2,\mathbb{Z}).$
Bolsinov and Taimanov proved a surprising fact that the corresponding geodesic flow is Liouville integrable in smooth category, but not in analytic one.
The obstruction to the analytic integrability  \cite{T}  is purely topological and related to the exponential growth of the fundamental group $\pi_1(M_A^3)$. \footnote{It is interesting that these topological manifolds were considered already in 1892 by Poincar\'e, who was interested in the examples of manifolds with large fundamental groups.}

In the quantum $Sol$-case considered in \cite{BDV} the separation of variables leads to the $\cosh$-version of Mathieu equation 
with parameter $\mu$ depending on the values of integer binary quadratic form $Q$, which is determined by the topology of the manifold (see \cite{BDV}). As a result, we always have the high degeneracy of the spectrum.
As we have seen in our Bianchi-VII case, the degeneracy of the eigenvalues disappears for the generic Bianchi tori $M_1^3$.

Indeed, in that case we have the positive definite quadratic form $Q(k)$ with generic real coefficients. The study of values of quadratic forms at integer points is a classical topic in number theory, going back to Fermat.

In particular, Sarnak showed in \cite{Sarnak}  that the pair correlation of values does satisfy Berry-Tabor prediction on a set of full measure in the space of such forms.
Eskin, Margulis and Mozes \cite{Eskin} provided stronger result by giving explicit diophantine conditions on the coefficients of the form $Q(k)$ for which this holds.
In particular, they proved that if $\beta>0$ is diophantine then the pair correlation of 
the values of $Q(k)=k_1^2+\beta k_2^2, \, k_1,k_2 \in \mathbb Z$ agrees with the Berry-Tabor conjecture.

As we have shown, in the Bianchi-VII$_0$ case the spectrum asymptotically behaves as the values of the quadratic form (\ref{tilQ}) on 3D integer lattice,
so it is natural to expect that the Berry-Tabor conjecture is true for the generic Bianchi tori, but this is still to be justified.

Like in the $Sol$-case \cite{BDV}, we can visualise the quantum monodromy for the symmetric Bianchi threefolds $M_n^3$ with $n=3,4,6$ by plotting the values of the $\mathbb Z_n$-invariant integrals $F_1=\sqrt{Q(p)} \cos n\varphi(p), \, F_2=\sqrt{Q(p)} \sin n\varphi(p)$ with $p\in \mathcal L^*$ and $\varphi$ being its angular coordinate. In particular, for $n=3$ we have the grid shown on Fig. 3 against the corresponding ``Sol-flower" from \cite{BDV} with the demonstration of the quantum monodromy in that case. In the $Sol$-case the quantum monodromy  is given by the ``cat-map" $$A=\left(\begin{array}{cc} 2& 1 \\  1 & 1 \end{array}\right),$$ while in our case it is simply rotation $R_3$ by $2\pi/3.$

\begin{figure}[h]
  \includegraphics[width=48mm]{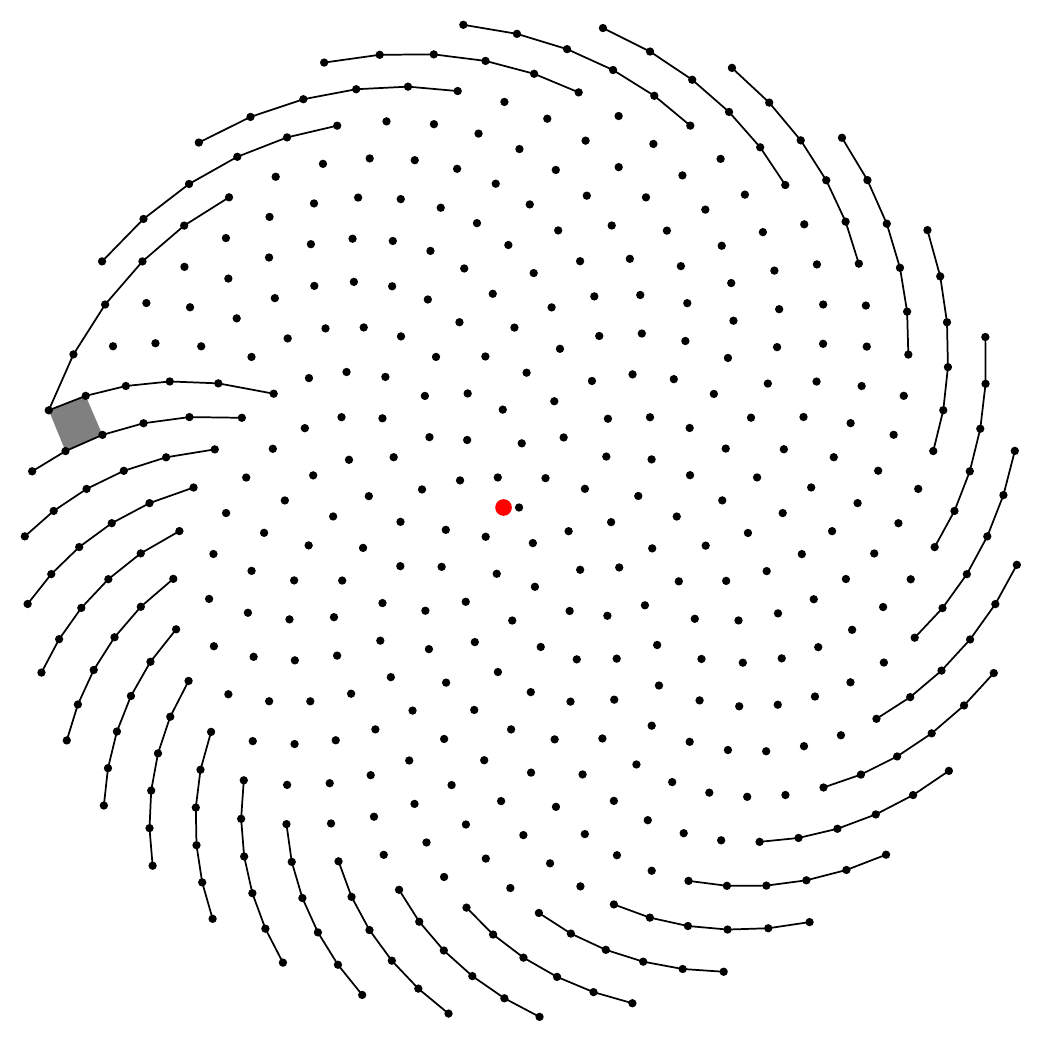} \quad \includegraphics[width=57mm]{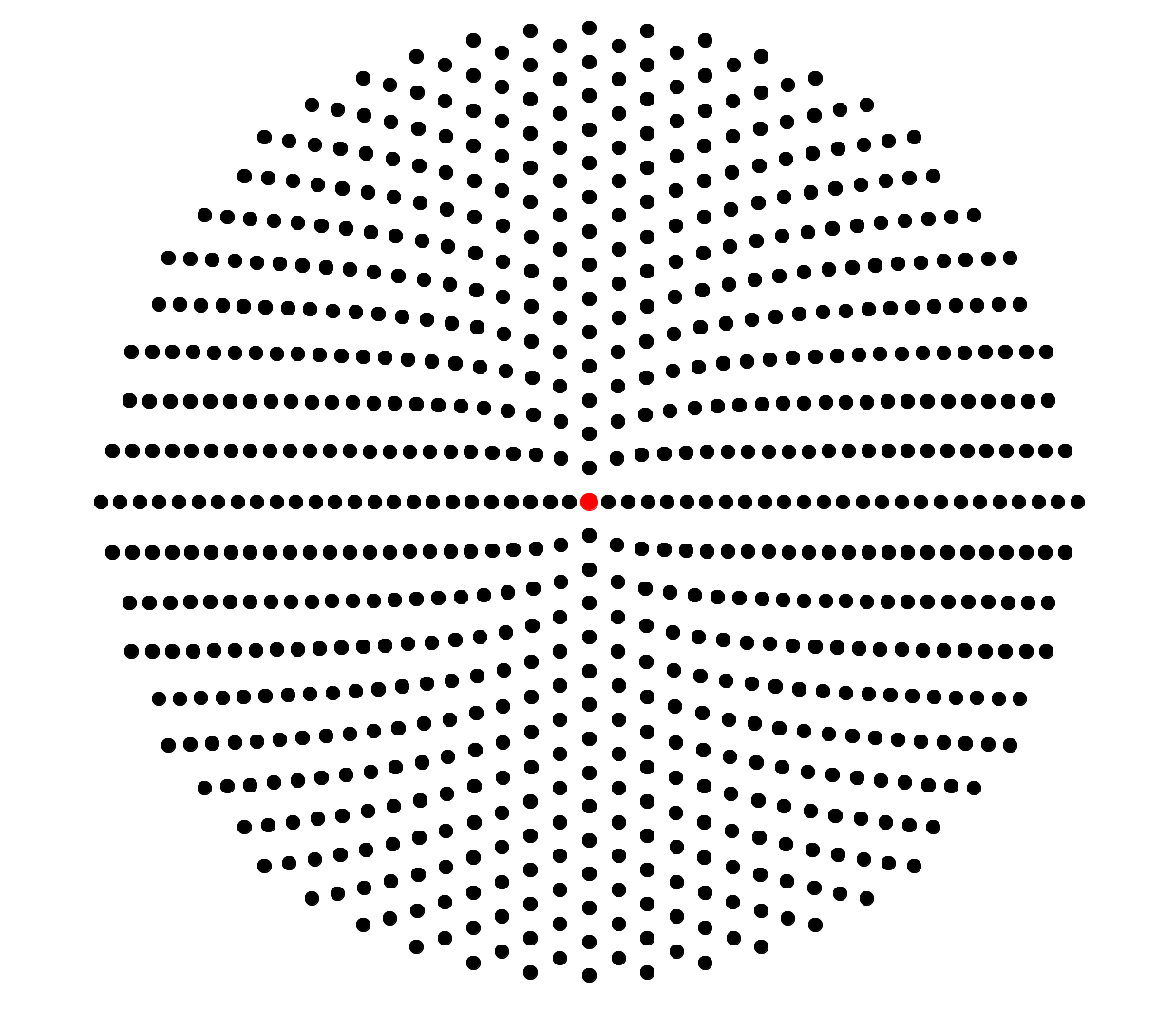} 
  \caption{Quantum monodromy in $Sol$-case and corresponding grid for Bianchi-VII threefold $M_3^3.$}
\end{figure}

Finally, let us compare our case with the Bianchi-IX case. In that case we have Lie group $SO(3)$ (or, its double cover $SU(2)$) with a left-invariant metric and the geodesics describing the motion of the classical Euler top \cite{Arnold}, which were found explicitly in terms of elliptic functions by Jacobi \cite{Jacobi}. 

In quantum case the separation of variables in spherical elliptic coordinates \cite{KI, GV} leads to the classical Lam\`e equation
$$
L=-\frac{\dee^2}{\dee x^2}+ m(m+1)k^2 \,sn^2(x,k), \quad m\in \mathbb Z_{\geq 0},
$$
where $sn(x,k)$ is the classical Jacobi elliptic function \cite{WW}. Note that the set of parameters $m(m+1), m \in \mathbb Z_{\geq 0}$ is the spectrum of the unit sphere $S^2$, while in our Bianchi-VII$_0$ case we have the spectrum of a flat 2D torus. 
The corresponding Lam\`e operators are known to have remarkable spectral property: in their spectrum there are only finitely many (namely, exactly $m$) gaps \cite{Ince}. The ends of the gaps can be computed using classical methods going back to Hermite and Halphen \cite{WW} (see also \cite{GV} and references therein), but the formulae become very complicated when $m$ grows. It is interesting to note that if $m\to \infty, \, k\to 0$ such that $mk\to \nu,$ then
$$m(m+1)k^2 \,sn^2(x,k) \to \nu^2\sin^2x,$$ and we have as a limit a (shifted) Mathieu operator from Bianchi-VII$_0$ case.

In the most symmetric Bianchi-IX case we have the unit sphere $S^3$ with the spectrum $m(m+2), \, m \in \mathbb Z_{\geq 0}$ with the multiplicity $(m+1)^2$, while in Bianchi-VII$_0$ case we have flat torus $T^3$ with spectrum given by the values of a quadratic form on the corresponding 3D lattice. In Thurston's list these two cases are represented by $S^3$ and $E^3$ respectively. 

Note that in Bianchi-IX case we have also the quotients $SU(2)/G$, where $G$ is either cyclic group $\mathbb Z_n$, or the double cover of one of the finite subgroups of $SO(3)$:
dihedral  $D_n,$ tetrahedral $T$, octahedral $O$ and icosahedral $I$ (in the last case the corresponding quotient is famous Poincar\'e homology sphere). The spectrum of the corresponding symmetric versions can be found in \cite{Ikeda}.



\section{Acknowledgements}
We are very grateful to Alexey Bolsinov for many helpful discussions. 

The work of the second author (Y.Ye) was supported by the XJTLU Research Development Funding (grant number: RDF-21-01-031).

\end{document}